\DeclareMathOperator*{\argmax}{arg\,max}
\def\BState{\State\hskip-\ALG@thistlm}
\theoremstyle{plain}
\newtheorem{thm}{Theorem}[section]
\theoremstyle{definition}
\newtheorem{defn}{Definition}[section]
\newtheorem*{remark}{Remark}
\title{SARSA(0) Reinforcement Learning over Fully Homomorphic Encryption}
\author{Jihoon Suh${}^{1\dagger}$ and Takashi Tanaka${}^{1}$}
\abstract{%
We consider a cloud-based control architecture in which the local plants outsource the control synthesis task to the cloud. In particular, we consider a cloud-based reinforcement learning (RL), where updating the value function is outsourced to the cloud. To achieve confidentiality, we implement computations over Fully Homomorphic Encryption (FHE). We use a CKKS encryption scheme and a modified SARSA(0) reinforcement learning to incorporate the encryption-induced delays. We then give a convergence result for the delayed updated rule of SARSA(0) with a blocking mechanism. We finally present a numerical demonstration via implementing on a classical pole-balancing problem. }
\keywords{%
Homomorphic Encryption, Reinforcement Learning, Privacy, Cloud-based Control
}
\begin{document}

\maketitle


\section{Introduction}
Certain control algorithms such as Model Predictive Control (MPC) and visual servoing require heavy real-time computational operations while on-site controllers in industrial control systems are often resource-constrained. The cloud-based control may resolve this issue by allowing on-site controllers to outsource their computations to cloud computers. Many such cloud-based control schemes have been proposed recently in the literature. For instance, \cite{hegazy2014industrial} considered a cloud-based MPC architecture with an application to large-scale power plant operations. \cite{wu2012cloud} considered a cloud-based visual servoing architecture, where an UDP-based communication protocol was developed for latency reduction. 

Despite the computational advantages, a naive implementation of cloud-based control can leak operational records of clients’ control systems, which often contain sensitive information. Since private information can be a critical asset in modern society, an appropriate privacy protecting mechanism is an essential requirement for cloud-based control services in practice. Both non-cryptographic (e.g., differential privacy) and cryptographic (e.g., homomorphic encryption (HE), the focus of this paper) approaches have been considered in the literature. 

Cloud-based control over HE is pioneered by \cite{Kogiso2015CybersecurityEO}, \cite{FAROKHI2016163}, \cite{KIM2016175}, followed by rapid developments of related technologies in more recent works. The primary focus of the encrypted control literature to date has been on the cloud-based \emph{implementations} of control policies. As seen in Fig.~\ref{fig:1} (a), the role of the cloud is simply to map the current state to the control input based on the synthesized control map. In this paper, we consider a cloud-based control \emph{synthesis} problem shown in Fig.~\ref{fig:1} (b). In this case, the role of the cloud is to synthesize a control map $\pi_t$. The implementation of the control policy is done locally by the client. 

We can classify the control synthesis problems into model-based and data-driven approaches. In model-based approaches, a mathematical model of the system is explicitly used for the policy construction (e.g., synthesis of explicit MPC laws \cite{bemporad2002explicit}), while data-driven approaches construct a policy $\pi_t$ from the historical input-output data without using system models (e.g., reinforcement learning by deep Q-network \cite{mnih2015human}). Control synthesis problem usually demands heavy computations or large data requiring privacy protection while evaluating the synthesized already-synthesized map can be trivial. It is thus natural to consider cloud-based control for the synthesis problem.

\begin{figure}
\begin{center}
\includegraphics[width=\columnwidth]{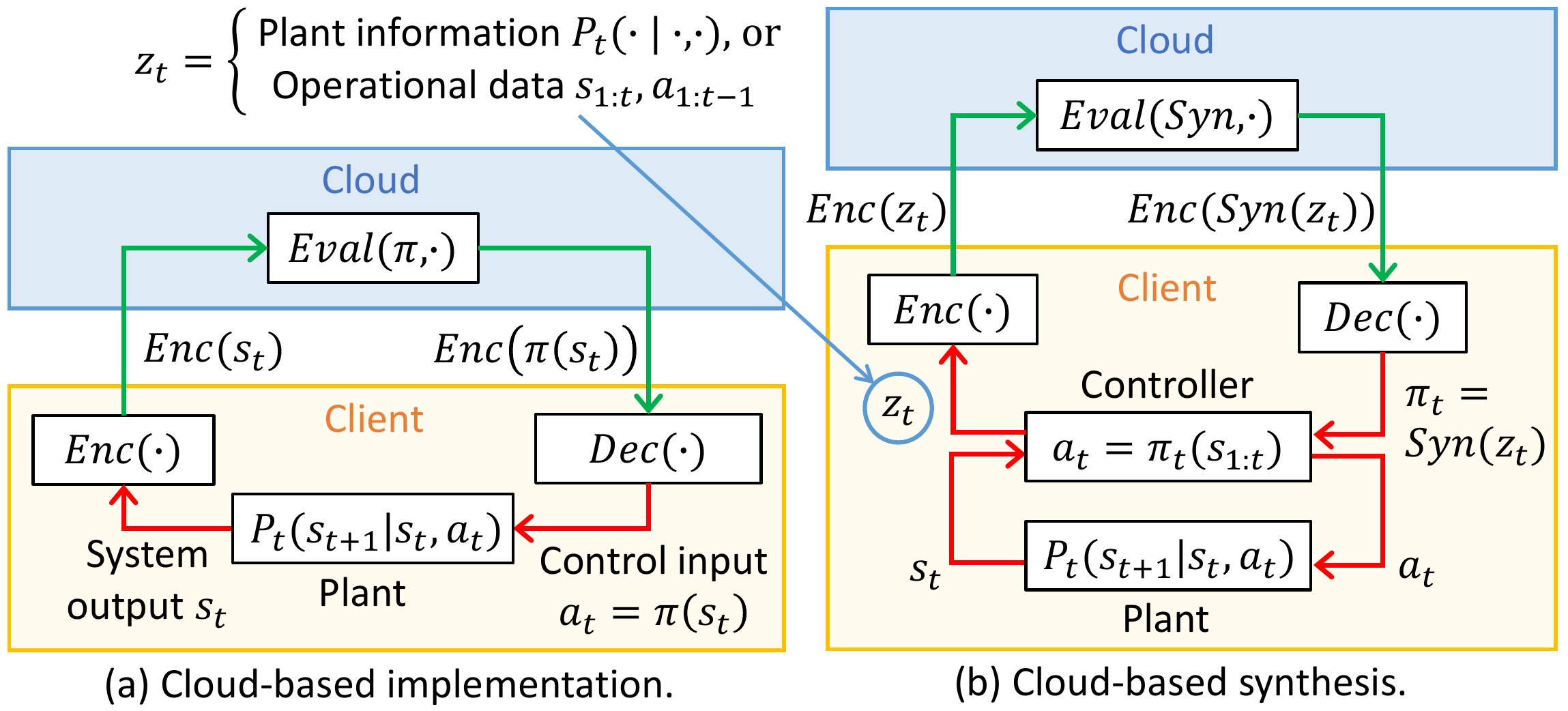}  
\caption{Cloud-based implementation of control policies vs. cloud-based control synthesis. Control synthesis can be based on plant models (model-based approach) or operational data (data-driven approach). } 
\label{fig:1}
\end{center}
\end{figure}

A large portion of the existing work on encrypted control is restricted to partially homomorphic encryption (PHE) schemes. However, FHE is preferred for control synthesis problem because it require more advanced algebraic operations where as PHE can only allow either addition or multiplication. As a step forward to more general control synthesis over FHE, this paper focuses on implementing an elementary reinforcement learning (RL) algorithm, namely the SARSA(0). We note that RL over FHE is largely unexplored, while supervised learning over FHE has been studied in recent years \cite{dowlin2016cryptonets}. 

The use of FHE in encrypted control introduces new challenges such as finite computation span \cite{KIM2016175}, encryption noise, or encryption delay. This paper focuses on the encryption-induced delay due to computational overhead by FHE. We propose a modified SARSA(0) under blocking states and discuss its convergence properties. We implement the modified SARSA(0) over Cheon-Kim-Kim-Song (CKKS) encryption scheme and apply it to a classical RL problem (pole balancing) to demonstrate the feasibility of our study.

The structure of the paper is as follows. In Section 2, we review the basics of HE and overview the encrypted control literature. In Section 3, we review the basics of RL and propose a modified SARSA(0) algorithm. In Section 4, we present the implementation results of a modified SARSA(0) over FHE. We conclude in Section 5 with future research directions.

\section{Preliminaries}
\textit{Homomorphic Encryption} (HE) allows addition and/or multiplcation operation over ciphertexts. A public-key homomorphic encryption method can enable cloud outsourced computations with data protection. A comprehensive introduction to this technology can be found otherwise in \cite{yi2014homomorphic}.
Homomorphic Encryption is \textit{Partially Homomorphic} if only one opeartion (either an addition or a multiplication) is preserved. Among many PHE schemes, ElGamal Cryptosystem and Paillier Cryptosystem have been extensively studied in encrypted control literature. The former is homomorphic with respect to multiplication and the latter with addition, \cite{Kogiso2015CybersecurityEO}, \cite{FAROKHI2016163}, \cite{8619835}, and \cite{8126799}.

If both additions and multiplications are preserved but for a limited number of operations, then the encryption scheme is called \textit{Somewhat (Leveled) Homomorphic}. \cite{homenc} proved that any Somewhat Homomorphic scheme with a \textit{bootstrapping} procedure can be promoted to \textit{Fully Homomorphic Encryption} by controlling the noise growth in ciphertexts. Since then, many more efficient schemes emerged. It is also worth mentioning that many of recent generation FHE schemes are among a few candidates for \textit{Quantum-resistant} cryptosystem, \cite{Lange2015InitialRO}. A more in-depth coverage on FHE can be found in \cite{Halevi2017}.

\subsection{CKKS Encryption scheme}
First developed in \cite{cryptoeprint:2016:421} with a bootstrapping procedure \cite{10.1007/978-3-319-78381-9_14}, CKKS scheme is one of the most prominent fully homommorphic encryption scheme in practice. CKKS is unique in that it can encrypt a vector of complex numbers and can perform approximate arithmetic as opposed to other schemes, which encrypt integers and perform exact arithmetic. Note that a special encoding structure is necessary as the plaintext space of CKKS scheme is an integer-coefficient polynomial ring modulo cyclotomic polynomial. A detailed review of CKKS scheme is out of scope of this paper and we refer \cite{cryptoeprint:2016:421}.

CKKS scheme's precision loss is comparable to a precision loss occuring in unencrypted floating point arithmetic. Therefore, this scheme is very convenient for control applications. It is also a batch-encryption scheme, making it suitable for many data-driven applications. Moreover, SEAL \cite{sealcrypto} provides a very accessbile open-source library.

\subsection{Literature Overview}
More recently, the secure evaluation of affine control law for explicit MPC using Paillier Cryptosystem (PCS) is shown in \cite{8126799}. In \cite{8619835}, an implicit MPC control evaluation using PCS was shown to be possible through the projected Fast Gradient Method and the use of a communication protocol. Also, in \cite{DARUP2018535}, a PCS encrypted implicit MPC was shown via the use of real-time proximal gradient method. But due to the nature of PCS, or any other partially homomorphic schemes, some parameters are assumed to be public. This is a valid assumption in some cases but may not fit some other scenarios.

On the other hand, FHE is not as adopted as PHE in control systems because it is still far more computationally demanding compared to PHE. Nonetheless, the feasibility of using the FHE for a cloud-based control system was first shown in \cite{KIM2016175}. Subsequently, in \cite{kim2019comprehensive}, a secure dynamic control was proposed using LWE-based FHE and a critical observation was that the noise growth of FHE is bounded by the stability of the closed-loop system under some conditions, eliminating the need for bootstrapping, which is one of the most computationally involved procedure.

\section{SARSA(0) over fully homomorphic encryption}

Consider a Markov decision process (MDP) defined by a tuple $(\mathcal{S}, \mathcal{A}, P, r)$, where $\mathcal{S}$ is a finite state space, $\mathcal{A}$ is a finite action space, $P(s_{t+1}|s_t, a_t)$ is the state transition probability and $r(s_t, a_t)$ is the reward. 
A policy is a sequence of stochastic kernels $\pi=(\pi_1, \pi_2, ...)$, where $\pi_t(a_t|s_{1:t}, a_{1:t-1})$ assigns the probability of selecting the next action $a_t\in\mathcal{A}$ given the history $(s_{1:t}, a_{1:t-1})$.
For a fixed policy $\pi$, the \emph{value} of each state $s\in\mathcal{S}$ is defined by
\[
V^\pi(s)={\mathbb{E}}^\pi \left[\sum_{t=1}^\infty\gamma^t r(s_t, a_t)| s_0=s\right],
\]
where $0\leq \gamma <1$ is a predefined discount factor. We say that a policy $\pi^*$ is \emph{optimal} if it maximizes the value of each state simultaneously. 
The existence of time-invariant, Markov, and deterministic optimal policy under the present setup is well-known, (e.g., \cite{bertsekas2011dynamic}).
Consequently, an optimal policy of the form $a_t=\pi^*(s_t)$ can be assumed without loss of performance. 
The value function $V^*$ under an optimal policy $\pi^*$ satisfies the Bellman's equation 
\begin{equation}
\label{eq:bellman_v}
V^*(s)=\max_{a\in\mathcal{A}} \left[r(s,a)+\gamma \sum_{s'\in\mathcal{S}}P(s'|s,a)V^*(s')\right].
\end{equation}

\subsection{Q-learning}
The focus of reinforcement learning algorithms in general is to construct an optimal policy $\pi^*$ based on the observed history of the states, actions, and reward signals. 
The Q-learning, \cite{watkins1992q}, achieves this by attempting the construction of the optimal \emph{Q-function} defined by
\begin{equation}
\label{eq:def_q}
Q^*(s,a):=r(s,a)+\gamma\sum_{s'\in\mathcal{S}}P(s'|s,a)V^*(s')
\end{equation}
for each $(s, a)\in\mathcal{S}\times\mathcal{A}$. 
It follows from the Bellman's equation \eqref{eq:bellman_v} that
\[
Q^*(s,a)=r(s,a)+\gamma\sum_{s'\in\mathcal{S}}P(s'|s,a)\max_{a'\in\mathcal{A}} Q^*(s',a')
\]
for each $(s, a)\in\mathcal{S}\times\mathcal{A}$. Once $Q^*$ is obtained, an optimal policy can also be obtained by $\pi^*(s)=\argmax_{a\in\mathcal{A}} Q^*(s,a)$.

Let $\{\alpha_n\}_{n=0,1,2,...}$ be a predefined sequence such that $0\leq \alpha_n\leq 1 \; \forall n$, $\sum_{n=0}^\infty \alpha_n=\infty$ and $\sum_{n=0}^\infty\alpha_n^2<\infty$. 
Denote by $n_t(s,a)$ the number of times that the state-action pair $(s,a)$ has been visited prior to the time step $t$.\footnote{If $(s,a)$ is visited for the first time at $t=1$, then $n_1(s,a)=0$ and $n_2(s,a)=1$.} 
Upon the observation of $(s_t, a_t, r_t, s_{t+1})$, the Q-learning updates the $(s_t, a_t)$ entry of the Q-function by
\begin{align}
Q_{t}&(s_t,a_t)=(1-\alpha_{n_t(s_t,a_t)})Q_{t-1}(s_t,a_t) \nonumber \\
&+\alpha_{n_t(s_t,a_t)}\left(r_t+\gamma \max_{a'\in\mathcal{A}}Q_{t-1}(s_{t+1},a')\right). \label{eq:q-learning}
\end{align}
No update is made to the $(s,a)$ entry if $(s,a)\neq(s_t, a_t)$.
The following result is standard in the literature:
\begin{thm}
\label{thm_q_learning}
For an arbitrarily chosen initial Q-function $Q_0$, the convergence $Q_t(s,a)\rightarrow Q^*(s,a)$ as $t\rightarrow \infty$ holds under the update rule \eqref{eq:q-learning} provided each state-action pair $(s,a)\in\mathcal{S}\times\mathcal{A}$ is visited infinitely often.
\end{thm}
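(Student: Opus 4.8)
The plan is to recognize the update \eqref{eq:q-learning} as an asynchronous stochastic approximation iteration and to invoke the standard convergence machinery for such schemes (in the spirit of Jaakkola--Jordan--Singh and of Tsitsiklis). First I would introduce the Bellman optimality operator $H$ on the finite-dimensional space of $Q$-functions by
\[
(HQ)(s,a) := r(s,a) + \gamma \sum_{s'\in\mathcal{S}} P(s'|s,a)\max_{a'\in\mathcal{A}} Q(s',a'),
\]
and recall the classical fact that $H$ is a $\gamma$-contraction in the sup-norm $\|\cdot\|_\infty$ whose unique fixed point is the $Q^*$ of \eqref{eq:def_q}. Writing $\Delta_t(s,a) := Q_t(s,a) - Q^*(s,a)$ for the error process, the rule \eqref{eq:q-learning} becomes, for the visited pair $(s_t,a_t)$,
\[
\Delta_t(s_t,a_t) = (1-\alpha_{n_t(s_t,a_t)})\,\Delta_{t-1}(s_t,a_t) + \alpha_{n_t(s_t,a_t)}\,F_t,
\]
where $F_t := r_t + \gamma\max_{a'\in\mathcal{A}}Q_{t-1}(s_{t+1},a') - Q^*(s_t,a_t)$, while $\Delta_t(s,a) = \Delta_{t-1}(s,a)$ for every other pair.

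Next I would verify the two structural hypotheses of the convergence lemma. Let $\mathcal{F}_t$ denote the history available at time $t$. Since $r_t = r(s_t,a_t)$ and $Q_{t-1}$ are $\mathcal{F}_t$-measurable and $s_{t+1}\sim P(\cdot|s_t,a_t)$, one gets $\mathbb{E}[F_t\mid\mathcal{F}_t] = (HQ_{t-1})(s_t,a_t) - (HQ^*)(s_t,a_t)$, so the contraction property yields the noisy-contraction bound $|\mathbb{E}[F_t\mid\mathcal{F}_t]| \leq \gamma\,\|\Delta_{t-1}\|_\infty$. Because $\mathcal{S}$ and $\mathcal{A}$ are finite, the rewards are uniformly bounded and $\max_{a'}Q_{t-1}(\cdot,a')$ deviates from $\max_{a'}Q^*(\cdot,a')$ by at most $\|\Delta_{t-1}\|_\infty$; hence the conditional variance obeys $\mathrm{Var}[F_t\mid\mathcal{F}_t] \leq C\big(1 + \|\Delta_{t-1}\|_\infty^2\big)$ for a constant $C$ depending only on $\gamma$ and the reward bound. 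These are exactly the contraction and quadratically-bounded-noise conditions under which the asynchronous stochastic approximation lemma guarantees $\Delta_t\to 0$ almost surely.

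The remaining ingredient is the translation of the step-size conditions to the asynchronous schedule. The effective gain applied to entry $(s,a)$ on its $n$-th visit is $\alpha_n$, and the hypothesis that every pair is visited infinitely often forces $n_t(s,a)\to\infty$; consequently, for each fixed $(s,a)$, $\sum_t \alpha_{n_t(s,a)}\mathbf{1}\{(s_t,a_t)=(s,a)\} = \sum_{n=0}^\infty \alpha_n = \infty$ and $\sum_t \alpha_{n_t(s,a)}^2\mathbf{1}\{(s_t,a_t)=(s,a)\} = \sum_{n=0}^\infty \alpha_n^2 < \infty$. With the per-component Robbins--Monro conditions established together with the contraction and noise bounds above, the cited lemma delivers $Q_t(s,a)\to Q^*(s,a)$ for every $(s,a)\in\mathcal{S}\times\mathcal{A}$, which is the assertion of the theorem.

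I expect the main obstacle to be the asynchronous bookkeeping rather than any single estimate: each entry of $Q$ is refreshed on its own clock with a step size indexed by its own visit count rather than by global time, and one must ensure that the ``visited infinitely often'' assumption is precisely what rescues the Robbins--Monro summability conditions component-wise and prevents starvation of any entry. I would therefore lean on an existing asynchronous stochastic approximation theorem to handle the coupling across components cleanly, treating the algebraic rewriting and the contraction/variance computations as routine.
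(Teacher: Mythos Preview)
Your proposal is correct and follows the classical Jaakkola--Jordan--Singh/Tsitsiklis route. The paper, however, does not supply a proof of this theorem at all: it simply introduces the statement with ``The following result is standard in the literature'' and moves on, so there is no in-paper argument to compare against. Your sketch is precisely the standard proof that underlies that remark.
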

In what follows, we assume that the underlying MDP is communicating, i.e., every state can by reached from every other state under certain policies.

\subsection{SARSA(0)}
The Q-learning method described above is considered an off-policy reinforcement learning algorithm in the sense that the hypothetical action $a'$ in the update rule \eqref{eq:q-learning} need not be actually selected as $a_{t+1}$. The SARSA(0) algorithm, on the other hand, is an on-policy counterpart that performs the Q-function update based on the experienced trajectory $(s_t, a_t, r_t, s_{t+1}, a_{t+1})$:
\begin{align}
Q_t(s_t,a_t)=&(1-\alpha_{n_t(s_t,a_t)})Q_{t-1}(s_t,a_t) \nonumber \\
&+\alpha_{n_t(s_t,a_t)}\left(r_t+\gamma Q_{t-1}(s_{t+1},a_{t+1})\right). \label{eq:sarsa}
\end{align}
We emphasize that the absence of the \emph{max} operation in \eqref{eq:sarsa} is a significant advantage for the implementation over FHE because the current polynomial approximations of comparison operations are highly inefficient, \cite{cryptoeprint:2019:1234}.

The convergence of SARSA(0) can be guaranteed under certain conditions.
Although a complete discussion must be differed to \cite{singh2000convergence}, roughly speaking it requires that (i) the learning policy is infinitely exploring, i.e., each state-action pair $(s,a)\in\mathcal{S}\times\mathcal{A}$ is visited infinitely often; and (ii) the learning policy is greedy in the limit, i.e., the probability that $a_{t+1}\neq \argmax_{a'\in\mathcal{A}}Q_{t-1}(s_{t+1}, a')$ tends to zero.
(Condition (i) is required for the convergence of Q-learning (off-policy counterpart of SARSA). The additional condition (ii) is needed due to the on-policy nature of SARSA(0). 
The following is a simple example of learning policies satisfying (i) and (ii):
\begin{defn}
\label{defn:epsilon}
(Decreasing $\epsilon$ policy) Let $n_t(s)$ be the number of times that the state $s$ has been visited prior to time step $t$ and define $\epsilon_t(s)=c/n_t(s)$ for some $0<c<1$. We say that $\pi$ is a decreasing $\epsilon$ policy if it selects an action $a_{t+1}$ randomly with the uniform distribution over $\mathcal{A}$ with probability $\epsilon_{t+1}(s_{t+1})$ and the greedy action $\argmax_{a\in\mathcal{A}}Q_{t-1}(s_{t+1}, a)$ with probability $1-\epsilon_{t+1}(s_{t+1})$.
\end{defn}
The following result is a consequence of Theorem 1 in \cite{singh2000convergence}.
\begin{thm}
\label{thm:sarsa}
For an arbitrarily chosen $Q_0$, the convergence $Q_t(s,a)\rightarrow Q^*(s,a)$ as $t\rightarrow \infty$ for each state-action pair $(s,a)\in\mathcal{S}\times\mathcal{A}$  occurs with probability one under the SARSA(0) update rule \eqref{eq:sarsa} and the decreasing $\epsilon$ policy.
\end{thm}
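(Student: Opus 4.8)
The plan is to deduce Theorem~\ref{thm:sarsa} from Theorem~1 of \cite{singh2000convergence} by checking that the decreasing $\epsilon$ policy of Definition~\ref{defn:epsilon} satisfies that theorem's hypotheses. Those hypotheses amount to: (a) the step-size schedule is Robbins--Monro, i.e. $\sum_n\alpha_n=\infty$ and $\sum_n\alpha_n^2<\infty$; (b) the rewards are bounded; and (c) the learning policy is greedy in the limit with infinite exploration, meaning precisely conditions (i) and (ii) discussed above — each $(s,a)\in\mathcal{S}\times\mathcal{A}$ is visited infinitely often with probability one, and the conditional probability of selecting a non-greedy action at the visited state tends to zero. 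Condition (a) is exactly our standing assumption on $\{\alpha_n\}$, and (b) is automatic because $\mathcal{S}$ and $\mathcal{A}$ are finite, so $\max_{s,a}|r(s,a)|<\infty$. Everything therefore reduces to verifying (c) for the decreasing $\epsilon$ policy.

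For the ``greedy in the limit'' half, observe that a non-greedy action can be chosen at $s_{t+1}$ only on an exploration step, whose conditional probability is $\epsilon_{t+1}(s_{t+1})=c/n_{t+1}(s_{t+1})$. Hence, once we know $n_t(s)\to\infty$ for every $s$, we immediately get $\epsilon_t(s)\to 0$ along the trajectory, which is condition (ii). Moreover, since each visit to a state $s$ triggers exploration with positive probability and then samples each action uniformly, infinite visitation of $s$ already forces infinite visitation of every pair $(s,a)$. Thus the whole of (c) follows from the single claim that, with probability one, every state is visited infinitely often.

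To establish that claim I would argue by contradiction, combining the communicating assumption with a conditional (L\'evy) Borel--Cantelli argument. Let $\mathcal{S}_\infty$ be the random set of states visited infinitely often; it is a.s.\ nonempty since $\mathcal{S}$ is finite, and after a finite random time the trajectory remains inside $\mathcal{S}_\infty$. For any fixed $s\in\mathcal{S}_\infty$, the exploration probability on its $k$-th visit is $c/k$ and $\sum_k c/k=\infty$, so L\'evy's conditional Borel--Cantelli lemma (applied to the adapted exploration indicators) yields infinitely many exploration events at $s$, and hence each action is taken from $s$ infinitely often. Now suppose $\mathbb{P}(\mathcal{S}_\infty\neq\mathcal{S})>0$. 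On that event $\mathcal{S}_\infty$ cannot be closed under all actions, for otherwise its nonempty complement would be unreachable from $\mathcal{S}_\infty$, contradicting the communicating property; hence there exist $s\in\mathcal{S}_\infty$, $a\in\mathcal{A}$, and $\tilde s\notin\mathcal{S}_\infty$ with $P(\tilde s\,|\,s,a)>0$. Along the (infinitely many) visits to $s$, the conditional probability of the joint event ``explore, select $a$, transition to $\tilde s$'' is bounded below by $(c/k)\,P(\tilde s\,|\,s,a)/|\mathcal{A}|$ on the $k$-th visit, and these bounds sum to infinity; a second application of conditional Borel--Cantelli then forces $\tilde s$ to be visited infinitely often, contradicting $\tilde s\notin\mathcal{S}_\infty$. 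Therefore $\mathcal{S}_\infty=\mathcal{S}$ a.s., which establishes infinite exploration, hence $n_t(s)\to\infty$, hence $\epsilon_t(s)\to 0$, i.e.\ property (c). Invoking Theorem~1 of \cite{singh2000convergence} then yields $Q_t(s,a)\to Q^*(s,a)$ for every $(s,a)$ with probability one.

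I expect the main obstacle to be the careful handling of the two Borel--Cantelli steps: because the exploration probabilities decay to zero one cannot appeal to independence or simple summability, and the escape events must be organized as a sequence of conditionally-positive-probability events adapted to the natural filtration so that L\'evy's lemma applies cleanly despite the vanishing $\epsilon_t$. All remaining ingredients are either standing assumptions or immediate consequences of the finiteness of $\mathcal{S}$ and $\mathcal{A}$, so once infinite exploration is secured the theorem follows by direct citation.
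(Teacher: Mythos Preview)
Your proposal is correct and follows essentially the same route as the paper's own (outline) proof: reduce to Theorem~1 of \cite{singh2000convergence} by verifying the step-size/boundedness hypotheses and the GLIE property, using a conditional Borel--Cantelli argument together with the communicating-MDP assumption to secure infinite exploration, and observing that $\epsilon_t(s)=c/n_t(s)\to 0$ gives greediness in the limit. Your treatment is in fact more careful than the paper's sketch on one point---you explicitly establish that every state is visited infinitely often via the $\mathcal{S}_\infty$ contradiction argument, whereas the paper's outline tacitly assumes this when introducing the visit times $t_s(i)$ and then defers to Lemma~4 of \cite{singh2000convergence}.
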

\begin{proof}
(Outline only) The result relies on the convergence of stochastic approximation \cite[Lemma 1]{singh2000convergence}, whose premises are satisfied if the learning policy is \emph{Greedy in the limit with infinite exploration (GLIE)}: (i) infinitely exploring, and (ii) greedy in the limit in that ${\mathbb{E}}|Q_{t-1}(s_{t+1}, a_{t+1})-\max_{a\in\mathcal{A}}Q_{t-1}(s_{t+1},a)|\rightarrow 0$ as $t\rightarrow \infty$ with probability one. To verify (i), we note that, under the assumption of communicating MDP, performing each action \emph{infinitely often} in each state is sufficient to guarantee the infinite exploration of states. Let $t_s(i)$ be the time step that the state $s$ is visited the $i$-th time. Since we are adopting the decreasing $\epsilon$ policy, e.g.,  $\epsilon_t(s) = c/n_t(s)$ with  0 $<$ c $<$ 1,  $\text{Pr}(a_{t_s(j)}=a)$ is dependent on $\text{Pr}(a_{t_s(i)}=a)$. But by the extended Borel-Cantelli lemma, we have w.p.1  $\sum_{i=1}^\infty \text{Pr}(a_{t_s(i)}=a) =\infty$ for each $a\in\mathcal{A}$. Thus, by Lemma 4 (Singh, 2000), we have $n_t(s,a) \rightarrow \infty$ a.s., where $n_t(s,a)$ denotes the number of actions performed at state $s$ at time $t$.
The condition (ii) holds by construction of the decreasing $\epsilon$-greedy policy. The conditions 1, 2, and 3 of the Theorem 1 in \cite{singh2000convergence} are satisfied by construction.

Parameters $\alpha_{n_t}(s_t, a_t)$ and $\gamma$ are often called learning rate and discount rate, respectively. The former is responsible for the convergence and its rate and is analogous to the step size in stochastic gradient descent. The discount rate determines the weight on the future rewards; a small discount rate corresponds to weighing the immediate rewards much higher. These parameters are more than often determined empirically because a single number does not work for different models and algorithms.
\end{proof}

\begin{figure}
\begin{center}
\includegraphics[width=8.1cm]{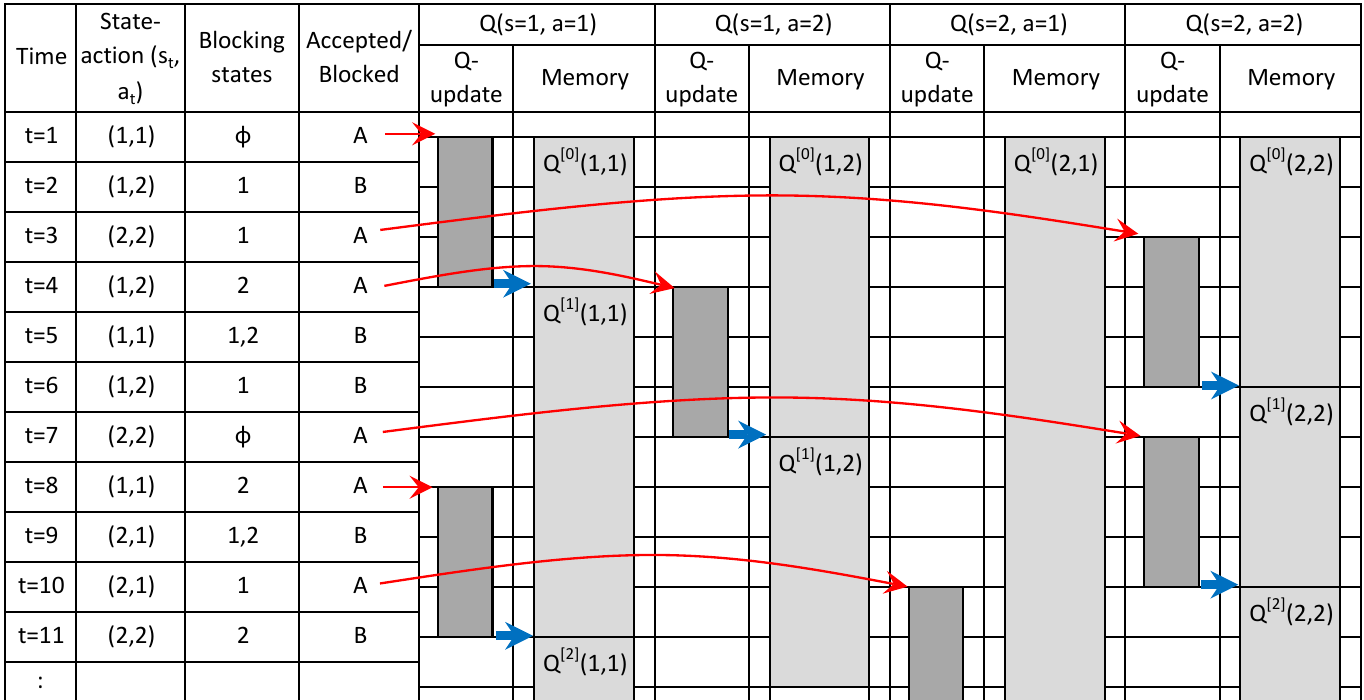}    
\caption{Q-table update with blocking states.} 
\label{fig:5}
\end{center}
\end{figure}

\subsection{SARSA(0) with blocking states}

Consider now a cloud-based implementation of SARSA(0). 
We assume that the Q-function update \eqref{eq:sarsa} is performed by the cloud while the decreasing $\epsilon$ policy is implemented by the client.
As shown in Fig.~\ref{fig:1}(b), at each time step $t$, the client can encrypt and upload a new data set $z_t=(s_t, a_t, r_t, s_{t+1}, a_{t+1})$ and, upon the completion of the Q-function update \eqref{eq:sarsa} on the cloud, the updated entry of the Q-function is downloaded and decrypted. If the computation of Q-update takes less than a unit time interval, then SARSA(0) together with the decreasing $\epsilon$ policy as described above can be implemented in the considered cloud-based architecture without any modifications. 

\begin{remark}
As the name suggests, the data set for tabular SARSA(0) is $z_t=(s_t, a_t, r_t, s_{t+1}, a_{t+1})$ when implemented locally. However, in the cloud implementation considered in \eqref{eq:sarsa}, the data $s_t, s_{t+1}, a_t, a_{t+1}$ only serve the purpose of indexing the correct entries within the copy of $Q$ table at the cloud. Thus, the data communicated is $z_t=(Q(s_t, a_t), r_t, Q(s_{t+1}, a_{t+1})$.
\end{remark}

However, encrypted Q-update can take up to some $L(\geq 2)$ unit time intervals. In such scenarios, a new data set may arrive before the previous Q-update is complete. For simplicity, we do not consider an analysis of communication delay between the cloud and the client. In what follows, we propose a modified SARSA, which we call \emph{SARSA(0) with blocking states}. The proposed Q-update rule is depicted in Fig.~\ref{fig:5}, where a special case with $|\mathcal{S}|=|\mathcal{A}|=2$ and $L=3$ is shown.

First, encrypted values of the initial Q-function
\[
Q_0=(Q^{[0]}(1,1), Q^{[0]}(1,2), Q^{[0]}(2,1), Q^{[0]}(2,2))
\]
are recorded on the cloud's memory. 
At time step $t=1$, the state-action pair $(s_1, a_1)=(1,1)$ is visited, so the encrypted data set $z_1=(Q(s_1,a_1), r_1, Q(s_2, a_2))$ is uploaded to the cloud. 
The computation \eqref{eq:sarsa} for $Q(1,1)$-update is initiated, which will take three time steps to complete. While the $Q(1,1)$ is being updated on the next $L-1$ time steps (i.e., $t=2,3$), the corresponding state $s=1$ is added to the list of \emph{blocking states}. When the state $s$ is blocking, no updates are allowed to the entries of $Q(s, \cdot)$.
For instance, at $t=2$, the state-action pair $(s_2, a_2)=(1,2)$ is visited.
However, since the state $s=1$ is in a blocking state list, the update request is rejected and the data set $z_2=(Q(s_2, a_2), r_2, Q(s_3, a_3))$ is discarded.
At $t=3$, the state-action pair $(2,2)$ is visited. Since $s=2$ is not blocking at $t=3$, the update is accepted, and the computation of $Q(2,2)$ update is initiated using the new data set $z_3=(Q(s_3, a_3), r_3, Q(s_4, a_4))$. At $t=4$, the computation for $Q(1,1)$ is complete and the result $Q^{[1]}(1,1)$ is recorded on the memory. ($Q^{[k]}(i,u)$ meaning the $k$-th revision of the $Q(i,u)$.) Since the state-action pair $(s_4, a_4)=(1,2)$ is visited when the state $s=1$ is removed from the blocking list, a new data set $z_4$ is accepted and the computation for $Q(1,2)$-update is initiated.

Under SARSA(0) with blocking states described above, we denote by $Q_t$ the most updated version of the Q-function recorded on the memory as of time step $t$. For instance, Fig.~\ref{fig:5} reads
\[
Q_{10}=(Q^{[1]}(1,1), Q^{[1]}(1,2), Q^{[0]}(2,1), Q^{[2]}(2,2)).
\]

Let $m_t = m_t(s_t,a_t)$ be the number of times that the update has been accepted at the state-action pair $(s,a)$ prior to time step $t$.
Whenever the update with data $z_{t-L}=(Q(s_{t-L}, a_{t-L}), r_{t-L}, Q(s_{t-L+1}, a_{t-L+1}))$ is accepted at time step $t-L$, the following update is made at time step $t$:
\begin{align*}
Q_t&(s_{t-L},a_{t-L})=(1-\alpha_{m_{t-L}})Q_{t-L}(s_{t-L},a_{t-L}) \\
&+\alpha_{m_{t-L}}\left(r_{t-L}+\gamma Q_{t-L}(s_{t-L+1},a_{t-L+1})\right). 
\end{align*}
Since the blocking mechanism ensures that the entries $Q_k(s_{t-L},\cdot)$ are unchanged over the time steps $t-L\leq k \leq t-1$, the above update rule can also be written as
\begin{align*}
Q_t&(s_{t-L},a_{t-L})=(1-\alpha_{m_{t-L}})Q_{t-1}(s_{t-L},a_{t-L}) \\
&+\alpha_{m_{t-L}}\left(r_{t-L}+\gamma Q_{t-1}(s_{t-L+1},a_{t-L+1})\right). 
\end{align*}

\begin{defn}
\label{defn:epsilon_delay}
We define the \emph{decreasing $\epsilon$ policy with delay} similarly to the decreasing $\epsilon$ policy (Definition~\ref{defn:epsilon}) except that greedy actions are selected by $a_{t+1}=\argmax_{a\in\mathcal{A}}Q_{t-1}(s_{t+1},a)$, where $Q_{t-1}$ is the most updated version of the Q-function available at $t-1$.
\end{defn}

We have the following convergence result.
\begin{thm}
For an arbitrarily chosen $Q_0$, the convergence $Q_t(s,a)\rightarrow Q^*(s,a)$ as $t\rightarrow \infty$ for each state-action pair $(s,a)\in\mathcal{S}\times\mathcal{A}$ occurs with probability one under SARSA(0) with delayed update and blocking if the decreasing $\epsilon$ policy with delay is adopted.
\end{thm}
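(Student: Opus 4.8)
The plan is to reduce the statement to the convergence result for undelayed SARSA(0) (Theorem~\ref{thm:sarsa}), equivalently to the asynchronous stochastic approximation lemma behind it (Lemma~1 of \cite{singh2000convergence}), by verifying that the blocking-and-delay modification preserves each of its hypotheses. Concretely, three facts must be established: (a) every state--action pair is \emph{accepted} for update infinitely often, so that $m_t(s,a)\to\infty$ almost surely; (b) the delayed $\epsilon$ policy of Definition~\ref{defn:epsilon_delay} is still GLIE, even though the greedy action is taken with respect to a stale copy $Q_{t-1}$; and (c) the $L$-step delay between data upload and table revision does not destroy the stochastic approximation recursion.

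For (a), the key structural fact is that a state $s$ remains in the blocking list for at most $L-1$ consecutive time steps, while the exploration rate $\epsilon_t(s)=c/n_t(s)$ is driven by $n_t(s)$, the number of \emph{visits} to $s$, which is unaffected by blocking. As in the proof of Theorem~\ref{thm:sarsa}, the communicating assumption together with the uniform lower bound $\epsilon_t(s)/|\mathcal A|$ on the probability of each action at a visited state yields, via the extended Borel--Cantelli lemma and Lemma~4 of \cite{singh2000convergence}, that $n_t(s)\to\infty$ almost surely; since at most $L-1$ consecutive visits to $s$ can be rejected, infinitely many ``acceptance opportunities'' occur at $s$, and a Borel--Cantelli argument parallel to that of Theorem~\ref{thm:sarsa} then shows that the action $a$ is chosen at infinitely many of them, i.e. $m_t(s,a)\to\infty$. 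Because $m_t(s,a)$ increases by one at each accepted update, the learning rates used at $(s,a)$ run through $\alpha_0,\alpha_1,\alpha_2,\dots$ in order, so $\sum\alpha_m=\infty$ and $\sum\alpha_m^2<\infty$ carry over verbatim.

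For (b) and (c): since the reward is bounded, all table entries --- and the finitely many pending revisions that coexist inside any window of length $L$ --- stay uniformly bounded and obey the same recursion, so their greedy sets eventually coincide; together with $\epsilon_{t+1}(s_{t+1})=c/n_{t+1}(s_{t+1})\to0$ this gives ${\mathbb E}\,|Q_{t-1}(s_{t+1},a_{t+1})-\max_{a}Q_{t-1}(s_{t+1},a)|\to0$ almost surely, which is the greedy-in-the-limit condition; the stale index merely changes \emph{which} bounded iterate appears and does not affect the vanishing gap. For (c) I would view the accepted updates as an asynchronous stochastic approximation with delay bounded by the constant $L$: blocking depends only on the past table-revision history, not on the realized transition $s_{t-L}\to s_{t-L+1}$, so that transition is still distributed as $P(\cdot\mid s_{t-L},a_{t-L})$ and the martingale property of the update noise is intact, while discarding blocked data sets only thins the sample stream, which is harmless once (a) guarantees infinitely many accepted samples. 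Moreover the blocking mechanism freezes $Q_k(s_{t-L},\cdot)$ for $t-L\le k\le t-1$, so the ``current entry'' term is exactly $Q_{t-1}(s_{t-L},a_{t-L})$ and the bootstrap term $Q_{t-1}(s_{t-L+1},a_{t-L+1})$ carries a delay at most $L$. Since fixed bounded delays are admissible in the asynchronous stochastic approximation framework, Lemma~1 of \cite{singh2000convergence} applies and delivers $Q_t(s,a)\to Q^*(s,a)$ almost surely for every $(s,a)$.

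The main obstacle I expect is step (a): showing that the blocking rule --- whose activation is correlated with the recent trajectory --- cannot conspire with the vanishing exploration schedule to starve a particular state--action pair forever. The fact that each block lasts at most $L-1$ steps is what rescues the relevant Borel--Cantelli estimate, but combining this $L$-bounded duration with the $n_t(s)$-dependent exploration rate so as to obtain simultaneously $n_t(s,a)\to\infty$ and $m_t(s,a)\to\infty$ is the delicate part; step (c), by contrast, is essentially an appeal to the bounded-delay asynchronous stochastic approximation theory.
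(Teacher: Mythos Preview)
Your overall strategy---verify infinite exploration, infinitely many accepted updates, and a greedy-in-the-limit property so that Lemma~1 of \cite{singh2000convergence} applies---is exactly the paper's strategy. Two points of comparison are worth noting. First, for the ``infinitely many accepted updates'' step the paper gives a one-line contradiction: if some $(s,a)$ is accepted only $K<\infty$ times, then blocking can reject at most $K(L-1)$ visits, so there are infinitely many acceptances after all---a contradiction. This is shorter and avoids the extra Borel--Cantelli layer you propose in~(a).

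Second, and more substantively, your step~(b) is where your argument diverges from the paper and where it is weakest. You argue that boundedness of the iterates makes the greedy sets of the stale and current tables ``eventually coincide''; but boundedness alone does not give that, and the claim is not justified. The paper instead exploits the structural property you already isolated in your step~(c): because blocking freezes the relevant $Q$-row over $t-L\le k\le t-1$, a greedy choice $a_{t-L+1}=\argmax_a Q_{t-L}(s_{t-L+1},a)$ is \emph{still} greedy with respect to $Q_{t-1}$, so the gap $Q_{t-1}(s_{t-L+1},a_{t-L+1})-\max_a Q_{t-1}(s_{t-L+1},a)$ is \emph{exactly zero} whenever the action was greedy. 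Combined with $\epsilon_t(s)\to 0$, this yields \eqref{eq:sarsa_delay_pf} directly, without any appeal to convergence of greedy sets or to bounded-delay stochastic approximation theory. In short: move the freezing observation from your~(c) into your~(b), and the GLIE step becomes clean; the general bounded-delay machinery you invoke in~(c) is then not needed.
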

\begin{proof}
We will prove that, under delayed update and blocking mechanism, (i) every state is explored infinitely often, and (ii) the update is still executed infinitely often, and (iii) the policy is greedy in the limit, again satisfying the convergence conditions.

(i) Since the decreasing $\epsilon$ policies with and without delay (Definitions~\ref{defn:epsilon} and \ref{defn:epsilon_delay}) share the same probability $\epsilon_t(s)$ for random exploration, the inequality $\sum_{i=1}^\infty \text{Pr}(a_{t_s(i)}=a)\geq \sum_{i=1}^\infty c/i=\infty$ remains valid under the decreasing $\epsilon$ policy with delay. As in the proof of Theorem~\ref{thm:sarsa}, this implies that each state-action pair is visited infinitely often. 

(ii) Assume that update occurs only $K(<\infty)$ times for some state-action pair $(s,a)$. Since $L<\infty$, the Q-update can be blocked at most $K(L-1)(<\infty)$ times. This means that updates are accepted infinitely many times, and this contradicts the assumption that updates occur only $K$ times.

(iii) The policy is greedy in the limit in that 
\begin{equation}
\label{eq:sarsa_delay_pf}
{\mathbb{E}}|Q_{t-1}(s_{t-L+1}, a_{t-L+1})-\max_{a\in\mathcal{A}}Q_{t-1}(s_{t-L+1},a)|\rightarrow 0
\end{equation}
as $t\rightarrow \infty$ with probability one.
To see this, suppose that $a_{t-L+1}$ was a greedy action selected at time step $t-L$, i.e.,
\[
a_{t-L+1}=\argmax_{a\in\mathcal{A}}Q_{t-L}(s_{t-L+1}, a).
\]
Due to the blocking mechanism, $Q_k(s_{t-L},\cdot)$ are unchanged over the time steps $t-L\leq k \leq t-1$. This means that 
\[
a_{t-L+1}=\argmax_{a\in\mathcal{A}}Q_{t-1}(s_{t-L+1}, a).
\]
Thus, whenever $a_{t-L+1}$ was selected greedily at $t-L$, we have
\[
Q_{t-1}(s_{t-L+1}, a_{t-L+1})-\max_{a\in\mathcal{A}}Q_{t-1}(s_{t-L+1},a))=0.
\]
Since the delayed $\epsilon$-greedy policy is greedy in the limit, the convergence \eqref{eq:sarsa_delay_pf} holds with probability one.

\end{proof}

\section{Numerical demonstration}
We used a classical pole-balancing problem from RL. The setup can be found from \cite{sutton_barto_2018}. Model parameters are unchanged but SARSA(0) updates are done over ciphertexts. For encryption, we used Microsoft SEAL to set up the CKKS scheme. The step size $\alpha_{m_t(s_t,a_t)} = \alpha$ is a constant but the policy parameter $\epsilon$ is decreased as the $Q$ values are updated. At every $L$ step, the client transmits Q-values of the state-action pairs it visited directly to the cloud.

Algorithm 1 shows the pseudocode of Encrypted SARSA(0) with delayed updates.  The data set protected here is $z_i = (Q_{i}(s_i,a_i), r_i, Q_{i}(s_{i+1}, a_{i+1}), \alpha, \gamma)$. For simplicity in notation, we use  $Q_{i}'$ to denote $Q_{i}(s_{i+1}, a_{i+1})$ in Algorithm 1. To take advantage of the batch encryption, the agent is allowed to explore many state-action pairs before updating $Q$ values at the cloud. The agent then uploads the batch data of $z_{i}$ and the cloud sequentially updates the $Q$ values. We use a tilde to denote the encoded version (e.g., $\tilde{z}$ to be the encoded version of $L$ data sets, $\tilde{z} = Encode([z_{t-L}, \dots, z_{t-1}])$) and a symbol $\#$ in front to denote the encrypted version (e.g., $\#\tilde{z}$ is the encrypted version, $\#\tilde{z} = Encrypt(\tilde{z})$).

\begin{algorithm}
\caption{Encrypted SARSA(0)}\label{euclid}

\textbf{Client (\emph{Start})}
\begin{algorithmic}[1]
\State Perform actions and state transitions $L$ times according to the up-to-date Q-table values
\State Store data $z_i$, where $i = t-L, \dots, t-1$
\State Encode $[z_{t-L}, \dots, z_{t-1}]$ and get $\tilde{z}$
\State Encrypt and get $\#\tilde{z}$ and upload to the cloud
\end{algorithmic}

\textbf{Cloud}
\begin{algorithmic}[1]
\State Update: $\#\tilde{Q} \gets (1-\#\tilde{\alpha})\#\tilde{Q} + \#\tilde{\alpha}(\#\tilde{r} + \#\tilde{\gamma}\#\tilde{Q^{\prime}})$ 
\State Upload the updated $\#\tilde{Q}$ back to the Client.
\end{algorithmic}

\textbf{Client}
\begin{algorithmic}[1]
\State Decrypt $\#\tilde{Q}$ to get $\tilde{Q}$
\State Decode $\tilde{Q}$ to get the updated vector of $Q_i$'s
\State \textbf{go to} \emph{Start}
\end{algorithmic}
\end{algorithm}

Table \ref{tab:table1} lists the CKKS Parameters used for the demonstration. Table \ref{tab:table2} shows the delay introduced by operations involved in Homomorphic Encryption. The maximum precision error in $Q$ values induced by the encrypted updates was only 0.0063\% which is small enough to be ignored in most practical applications. 
\begin{table}[htb]
  \begin{center}
    \caption{Encryption Parameters}
    \label{tab:table1}
    \begin{tabular}{|c|c|c|c|}
      \textbf{CKKS Parameters} & \textbf{Chosen} \\ 
      \hline
      \textbf{$N$} & 8192 \\
      \textbf{$q$} & ($2^{50}$, $2^{30}$, $2^{30}$, $2^{30}$, $2^{50}$) \\
      \textbf{Scale Factor} & $2^{30}$ \\
      \textbf{Available Slots} & $4096$

    \end{tabular}
  \end{center}
\end{table}

\begin{table}[htbp]
  \begin{center}
    \caption{Number of operations and average time consumed per each batch update}
    \label{tab:table2}
    \begin{tabular}{|c|c|c|c|}
      \textbf{Type} & \textbf{Num} &\textbf{Time (ms)} & \textbf{Percent}\\ 
      \hline
      \textbf{Encode} & 5 & 6.695 & 23.59 \%\\ 
      \textbf{Encrypt} & 5 &  33.519 & 39.38 \% \\
      \textbf{Multiply} & 4 & 2.549 & 2.99 \% \\ 
      \textbf{Relinearize} & 4 & 14.909& 17.51 \% \\
      \textbf{Rescale} & 4 & 7.886 & 9.26 \% \\
      \textbf{Addition} & 3 & 0.074 & 0.09 \% \\
      \textbf{Decrypt} & 1 & 1.225 & 1.44 \% \\
      \textbf{Decode} & 1 & 4.881 & 5.73 \%

    \end{tabular}
  \end{center}
\end{table}

For this experiment, we used $L = 1000$ to generate the data because the state and action sizes were small for this example. However, the computation time at the cloud is fixed up to $L = 4096$ with the same encryption parameters. Thus it can allow larger batch operations if necessary, such as systems with more number of states and/or actions. As usual in classical SARSA(0), we observed that encrypted $Q$ value updates converged with large iterations and the inverted pendulum simulation showed successful balancing. As expected, increasing $L$ sped up the learning because the cloud could perform the time-consuming encrypted operation in a more compact manner.

We note that the client is still burdened with non-trivial computing tasks, most notably CKKS encoding and homomorphic encryption, which takes up more than a half of the homomorphic operations involved. On the other hand, decryption and decoding tasks are less strenuous. This is a prevalent issue in encrypted control as seen in the results of \cite{DARUP2018535} and many others, dwindling the appeal of cloud-based control. Nevertheless, the privacy guarantee acquired can compensate for increased computing burden and can be suitably applied to problems where control synthesis can be done off-line.

\section{Summary and future work}
We provided a general framework for control \emph{synthesis} over FHE. In particular, we studied the feasibility of an encrypted SARSA(0). We showed a convergence result for the SARSA(0) with delayed updates under a blocking state mechanism. We then applied it on a classical problem of inverted pole balancing. Numerical results showed that the encrypted learning via SARSA(0) induced minimal precision loss. The algorithm still showed convergence under delay but nonetheless it required more computing time than unencrypted version in exchange for the privacy guarantee. 

Since the time scales of control synthesis (i.e., the frequency at which control policies are updated) are typically much slower than that of implementation, we expect that the framework shown in this paper is more suitable for encrypted control. Moreover, sensor data and control command are not encrypted at the plant, allowing fast implementation of the control policy once the policy is synthesized. Thus, if synthesis could be done offline, the method proposed can be applied in practice with no loss from the encryption induced delay.

Although SARSA(0) is an elementary algorithm, it is easier to analyze theoretically than more advanced deep learning algorithms. It can also be the basis of more advanced encrypted learning implementations.
Many challenges remain for encrypted control in general. The delays introduced can limit the area of applications. Suitable area of applications can be slowly-varying systems with critical requirement on privacy breach. The other critical challenge is the difficulty of implementing operations such as comparison and sorting over FHE. This makes the execution of maximum on ciphertext domain very challenging. 

A natural extension to the proposed study is to upgrade the synthesis algorithm to more advanced control synthesis algorithms such as using artificial neural network or model predictive control over FHE. To this end, an efficient polynomial comparison function could be of significant value. Also, only numerical experiment was performed regarding the encryption induced noise. We aim to study analytically the effect of encryption induced noise on the stability of the learning and control algorithms over FHE.

\bibliography{reference}

\end{document}